\setlist{itemsep=0.5ex,parsep=0pt} % more compact lists
\newcommand{\out}[1] {}
\newcommand{\cd}[1]{\lstinline`{#1}`}
\newif\ifnotes
\newcommand{\punt}[1]{}
\newcommand{\secref}[1]{Section~\ref{sec:#1}}
\newcommand{\figref}[1]{Figure~\ref{fig:#1}}
\newcommand{\lineref}[1]{line~\ref{line:#1}}
\newcommand{\thmref}[1]{Theorem~\ref{thm:#1}}
\newcommand{\lemref}[1]{Lemma~\ref{lem:#1}}
\renewcommand{\eqref}[1]{Equation~(\ref{eq:#1})}
\newcounter{remark}[section]
\newcommand{\defn}[1]{\emph{\textbf{#1}}}
\definecolor{darkblue}{HTML}{0007C9}
\DeclarePairedDelimiterX{\dblangle}[1]{\langle}{\rangle}{\hspace{-0.5ex}\delimsize\langle{#1}\delimsize\rangle\hspace{-0.5ex}}
\theoremstyle{plain}
\newtheorem{corollary}{Corollary}
\theoremstyle{definition}
\newtheorem{definition}{Definition}
\newdimen\zzlistingsize
\newdimen\zzlistingsizedefault
\newdimen\kwlistingsize
\gdef\lco{black}
\newcommand{\Lstbasicstyle}{\fontsize{\zzlistingsize}{1.1\zzlistingsize}\ttfamily\color{\lco}}
\newcommand{\keywordstyle}{\fontsize{1.09\kwlistingsize}{\kwlistingsize}\normalfont\bf\color{\lco}}
\newlength{\zzlstwidth}
\newcommand{\lcm}{\color{\lco}}
\newcommand{\code}[1]{\lstinline!#1!}
\newcommand{\pathL}{\textsf{\textbf L}}
\newcommand{\pathR}{\textsf{\textbf R}}
\newcommand{\dagdepth}[1]{\ensuremath{\mathop{\text{D}}( {#1} )}}
\newcommand{\forkpath}[1]{\ensuremath{\mathop{\text{P}}( {#1} )}}
\newcommand{\precedes}{\ensuremath{\preccurlyeq}}
\newcommand{\lcp}[2]{\ensuremath{\mathop{\text{LCP}}( {#1}, {#2} )}}
\newcommand{\concat}{%
  \mathbin{{+}\mspace{-8mu}{+}}%
}
\renewcommand{\infer}[3][]{
\ifthenelse{\equal{#1}{}}{
\inferrule{#2}{#3}
}
{
\inferrule*[right={\scriptsize \textbf{#1}}]
{#2}
{#3}
}
}
\crefname{lem}{lemma}{lemmas}
\Crefname{lem}{Lemma}{Lemmas}
\crefname{thm}{theorem}{theorems}
\Crefname{thm}{Theorem}{Theorems}
\crefname{figure}{figure}{figures}
\Crefname{figure}{Figure}{Figures}
\setlist{itemsep=0.05ex,parsep=0pt}             % more compact lists
\begin{document}

%% Title information
% \title[Short Title]{Full Title}         %% [Short Title] is optional;
%                                         %% when present, will be used in
%                                         %% header instead of Full Title.
\title{DePa: Simple, Provably Efficient, and Practical Order Maintenance for
Task Parallelism}
%\subtitle{{\color{red}UNPUBLISHED DRAFT}}
% \titlenote{with title note}             %% \titlenote is optional;
%                                         %% can be repeated if necessary;
%                                         %% contents suppressed with 'anonymous'
% \subtitle{Subtitle}                     %% \subtitle is optional
% \subtitlenote{with subtitle note}       %% \subtitlenote is optional;
%                                         %% can be repeated if necessary;
%                                         %% contents suppressed with 'anonymous'

%% Author information
%% Contents and number of authors suppressed with 'anonymous'.
%% Each author should be introduced by \author, followed by
%% \authornote (optional), \orcid (optional), \affiliation, and
%% \email.
%% An author may have multiple affiliations and/or emails; repeat the
%% appropriate command.
%% Many elements are not rendered, but should be provided for metadata
%% extraction tools.

%% Author with single affiliation.
\author{Sam Westrick}
\affiliation{
%  \position{Position1}
  \department{Department of Computer Science}              %% \department is recommended
  \institution{Carnegie Mellon University}            %% \institution is required
%  \streetaddress{Street1 Address1}
  \city{Pittsburgh}
  \state{PA}
%  \postcode{Post-Code1}
  \country{USA}                    %% \country is recommended
}
\email{swestric@cs.cmu.edu}          %% \email is recommended

\author{Larry Wang}
\affiliation{
%  \position{Position1}
  \department{Department of Computer Science}              %% \department is recommended
  \institution{Carnegie Mellon University}            %% \institution is required
%  \streetaddress{Street1 Address1}
  \city{Pittsburgh}
  \state{PA}
%  \postcode{Post-Code1}
  \country{USA}                    %% \country is recommended
}
\email{lawrenc2@alumni.cmu.edu}

\author{Umut A. Acar}
\affiliation{
%  \position{Position1}
  \department{Department of Computer Science}              %% \department is recommended
  \institution{Carnegie Mellon University}            %% \institution is required
%  \streetaddress{Street1 Address1}
  \city{Pittsburgh}
  \state{PA}
%  \postcode{Post-Code1}
  \country{USA}                    %% \country is recommended
}

\begin{abstract}
A number of problems in parallel computing require reasoning about the dependency structure in parallel programs. 
For example, dynamic race detection relies on efficient ``on-the-fly''
determination of dependencies between sequential and parallel tasks
(e.g. to quickly determine whether or not two memory accesses occur in
parallel).
Several solutions to this ``parallel order maintenance'' problem has
been proposed, but they all have several drawbacks, including lack of
provable bounds, high asymptotic or practical overheads, and poor support for parallel execution.

In this paper, we present a solution to the parallel order maintenance problem that is provably efficient, fully parallel, and practical.
Our algorithm---called \defn{DePa}---represents a computation as a graph and
encodes vertices in the graph with two components: a \emph{dag-depth}
and a \emph{fork-path}.
In this encoding, each query requires $O(f/\omega)$ work, where
$f$ is the minimum dynamic nesting depth of the two vertices compared, and
$\omega$ is the word-size.
%
%
% We also provide a key implementation technique that integrates with
% scheduling to ensure good practical performance.
%
In the common case (where $f$ is small, e.g., less than 100), each
query requires only a single memory lookup and a small constant number of
bitwise instructions.
Furthermore, graph maintenance at forks and joins requires only constant work,
resulting in no asymptotic impact on overall work and span.
DePa is therefore work-efficient and fully parallel.

\end{abstract}

%% 2012 ACM Computing Classification System (CSS) concepts
%% Generate at 'http://dl.acm.org/ccs/ccs.cfm'.
% \begin{CCSXML}
% <ccs2012>
% <concept>
% <concept_id>10011007.10010940.10010941.10010949.10010950.10010954</concept_id>
% <concept_desc>Software and its engineering~Garbage collection</concept_desc>
% <concept_significance>500</concept_significance>
% </concept>
% <concept>
% <concept_id>10011007.10011006.10011008.10011009.10010175</concept_id>
% <concept_desc>Software and its engineering~Parallel programming languages</concept_desc>
% <concept_significance>500</concept_significance>
% </concept>
% <concept>
% <concept_id>10011007.10011006.10011008.10011009.10011012</concept_id>
% <concept_desc>Software and its engineering~Functional languages</concept_desc>
% <concept_significance>500</concept_significance>
% </concept>
% <concept>
% <concept_id>10003752.10003809.10010170</concept_id>
% <concept_desc>Theory of computation~Parallel algorithms</concept_desc>
% <concept_significance>100</concept_significance>
% </concept>
% </ccs2012>
% \end{CCSXML}

% \ccsdesc[500]{Software and its engineering~Garbage collection}
% \ccsdesc[500]{Software and its engineering~Parallel programming languages}
% \ccsdesc[500]{Software and its engineering~Functional languages}
% \ccsdesc[100]{Theory of computation~Parallel algorithms}
%% End of generated code

%% Keywords
%% comma separated list
%% \keywords are mandatory in final camera-ready submission
\keywords{parallelism, sp-order maintenance, race detection}

%% \maketitle
%% Note: \maketitle command must come after title commands, author
%% commands, abstract environment, Computing Classification System
%% environment and commands, and keywords command.
\maketitle

\section{Introduction}
\label{sec:intro}
With the mainstream availability of multicore computers, parallel
programming today is important, relevant, and challenging.
One of the chief factors that make it challenging is race conditions,
which are difficult to avoid, debug, and fix.
To help address this challenge, researchers have developed a variety of
\emph{dynamic race detection} algorithms which instrument a program to
identify data races dynamically, i.e., while the program is
running~\cite{SavageBuNe97,FlanaganFr09,cc-hybrid-2003,yrc-2005-racetrack,Mellor-Crummey91,FengLe97,ChengFeLe98,BenderFiGi04,Fineman05,RamanZhSa10,RamanZhSa12,uaf+race-2016,xsl-determinacy-2020}.
At a high level, these approaches broadly fall into two camps: those targeting
general, coarse-grained
concurrency~\cite{SavageBuNe97,FlanaganFr09,cc-hybrid-2003,yrc-2005-racetrack},
and those targeting fine-grained, structured parallelism~\cite{Mellor-Crummey91,FengLe97,ChengFeLe98,BenderFiGi04,Fineman05,RamanZhSa10,RamanZhSa12,uaf+race-2016,xsl-determinacy-2020}.
In this paper, we focus on the latter, specifically for nested fork-join
parallelism (e.g. nested parallel-for loops), provided by
systems such as
Cilk~\cite{frigolera98,blumofe96,muller+responsive-2020},
ParlayLib~\cite{bad+parlaylib-2020},
OpenMP~\cite{openmp18},
Microsoft TPL~\cite{tpl09},
Intel TBB~\cite{threadingbuildingblocksmanual},
Fork/Join Java~\cite{lea00},
various forms of
Parallel ML~\cite{manticore-implicit-11,OTU-smlsharp-2018,rmab-mm-2016,mah-responsive-2017,gwraf-hieararchical-2018,westrick+disent-2020,awa+space-2021},
and many others.
Programs written in these languages and libraries
typically consist of many (e.g., millions of)
fine-grained tasks which are spawned and synchronized in a structured fashion.

At the heart of existing race-detection algorithms is an
\defn{order maintenance} data structure which
encodes sequential dependencies between
tasks~\cite{Mellor-Crummey91,FengLe97,ChengFeLe98,BenderFiGi04,Fineman05,RamanZhSa10,RamanZhSa12,uaf+race-2016,xsl-determinacy-2020,xal-determinacy-2021}.
The purpose of this data structure is to enable efficient queries, to
quickly determine
whether or not two tasks are logically concurrent with one another.%
\footnote{Two tasks are \emph{logically concurrent} if there exists
a schedule in which their operations could be interleaved.}
Race detection then can be performed by monitoring individual reads and writes:
a data race is detected when two (logically) concurrent tasks both access the
same memory location, and at least one of these tasks modifies the location.

In practice, the overheads of race detection are largely determined by
the performance characteristics (and implementation details) of this
order maintenance structure.
As such, many solutions have been developed, specifically with the goals of
low overhead both in theory and practice, and support for parallel execution.
However, existing solutions come with a range of drawbacks.
Some do not support parallel execution, or only provide limited support for
it~\cite{FengLe97,RamanZhSa10,BenderFiGi04}.
Others support parallel execution but
incur either high theoretical overhead per query~\cite{Mellor-Crummey91} or
present no bound on this cost~\cite{RamanZhSa12}.
For nested fork-join programs, Utterback et al.'s algorithm~\cite{uaf+race-2016}
is asymptotically optimal; however, it is complex and difficult to implement,
as it requires a custom scheduler which is tailored specifically to order
maintenance.
We therefore seek a solution which has provably low overhead, is fully parallel,
and is simple to implement in practice.

Our solution is a ``depths-and-paths'' data structure,
or \defn{DePa} for short, which encodes the computation as a
graph and performs precendence (reachability) queries between vertices.
The DePa structure maintains only the dynamic task tree instead
of the whole computation graph, and annotates each tree node with two
quantities: (1) its \defn{depth} in the graph, and (2) its \defn{path} from
the root of the task tree.
We then prove that the depth and the path are sufficient
for answering precedence queries on the computation graph.
Essentially, the depth and path serve as two ``coordinates'' in a computation
graph: together, these not only uniquely identifying a vertex, but
also enable us to efficiently determine whether or not there is a path
between two vertices.

For most computations the depth and path can fit into a few machine
words and therefore the cost of precedence queries is a small constant.
In particular, we prove
(\thmref{query-cost})
that our algorithm requires only $O(\min(f_u,f_v)/\omega)$ work for a query
between two graph vertices $u$ and $v$.
Here, $f_u$ and $f_v$ are (respectively) the dynamic nesting depths
of $u$ and $v$, and $\omega$ is the word size of the machine.
This is effectively a constant bound in practice: with proper
load-balancing and granularity control, dynamic nesting depths are typically
small (e.g., less than 100).
In terms of space, we show that the overhead of the algorithm is small:
storing the depth and path of a vertex $v$ requires only $O(f_v/\omega)$ space,
and any additional data required by the algorithm is bounded by
$O(PF + PF/\omega)$ where $P$ is the number of processors and
$F$ is the maximum dynamic nesting depth of the computation.

In addition to supporting queries efficiently, our
DePa algorithm is fully parallel.
Maintaining the task tree requires only local computation with
constant cost at each fork and join to update depths and paths.
The work and span of a program therefore stays asymptotically unchanged by
DePa maintenance, and hence so does its parallelism (the ratio of work to
span).
Queries are also parallel, as they can be implemented almost entirely
with simple bitwise operations and arithmetic, and do not require
any synchronization.
Finally, to improve efficiency in practice, we present an optimization
technique which integrates DePa with a work-stealing scheduler to
improve data locality at queries.
This optimization---although not essential---is easy to implement and requires
only small changes to the scheduler itself.

In summary, the contributions of this paper include the following.
\begin{itemize}
\item The ``depth-and-paths'' (DePa) data structure for representing and querying task graphs for series-parallel ordering, along with a correctness proof (\secref{sp-order}).

\item Asymptotic analysis establishing work, span, and space bounds
for DePa (\secref{cost-analysis}).

\item Optimization techniques for improving practical efficiency by
    integrating with work-stealing scheduling (\secref{sched}).
\end{itemize}

\section{Preliminaries}
\label{sec:prelim}

\paragraph{Fork-join}
We consider \defn{fork-join} parallel programs which are based on
\defn{tasks} organized in a dynamic \defn{task tree}.
Initially, there is a single ``root'' task.
At any moment, any leaf (a task with no children) may \defn{fork}, which
creates two child tasks.
While the children are executing, the parent task is suspended.
As soon as both children have completed, they \defn{join} with the parent,
which deletes the children from the tree and resumes execution of the parent
task.
At any point of program execution, the tasks at leaf nodes of the task tree
are said to be \defn{running} while the tasks at internal nodes are said to be
\defn{suspended}.
Any task which is either suspended or running is said to be \defn{active}.
We say that two tasks are \defn{concurrent} if neither is an ancestor of the
other.
That is, in the task tree, concurrent tasks could be siblings, cousins, etc.

\paragraph{Dags (Computation Graphs)}
A fork-join computation can be summarized with a directed, acylic graph called
a \defn{dag}~\cite{acrs-dag-calculus-2016}.
The dag represents the history of a completed computation:
each vertex represents a sequence of executed instructions uninterrupted by
a fork or join,
and edges are ordering constraints.
We can construct a dag dynamically during execution by (1) creating new outgoing
edges to two new vertices at each fork, and (2) creating two incoming edges to a
single new vertex at each join.
We say that vertex $u$ \defn{precedes} vertex $v$, denoted
$u \preccurlyeq v$, if there exists a path in the dag from $u$ to $v$.
Note that $\preccurlyeq$ is a partial order.
%
% Two strands $s$ and $t$ are \defn{concurrent}, denoted $s \parallel t$,
% if both $s \not\preccurlyeq t$ and $t \not\preccurlyeq s$.

\paragraph{Series-Parallel Dags}
The computation graphs of (nested) fork-join parallel programs are
\emph{series-parallel} graphs, i.e. inductively
constructed by sequential and parallel composition.
This is illustrated in
\figref{series-parallel}.

\section{DePa Algorithm}
\label{sec:sp-order}
We present an order maintenance data structure which efficiently
supports queries of the form $u \preccurlyeq_G v$, i.e., determining whether or
not there is a path from $u$ to $v$ in computation graph $G$.
At a high level, our algorithm assigns each vertex a unique
\defn{vertex identifier} consisting of two components:
a \emph{dag-depth} and a \emph{fork-path}.
Together, the dag-depths and fork-paths of vertices allow us to identify
vertices and efficiently determine their relative positions in the dag.
We call our algorithm \textbf{DePa} for its combined use of \textbf{de}pths
and \textbf{pa}ths.

\subsection{Dag-Depths and Fork-Paths}
The \defn{dag-depth} of a vertex
is the length of the longest path that ends at that vertex.
% which
% can be computed as $\dagdepth s = 1 + \max_{u \precedes s} \dagdepth u$,
% i.e. one more than the maximum dag-depth of its predecessors.
%
The \defn{fork-path} of a vertex encodes the
nesting of its corresponding task as a sequence of bits.
Each bit is one of \pathL{} or \pathR{}, to respectively indicate a fork either
on the left or the right.
We denote dag-depths with $\dagdepth v$ and fork-paths with $\forkpath v$.
Furthermore, we denote the longest common prefix between two fork-paths
with $\lcp{p_1}{p_2}$.

Dag-depths and fork-paths can be computed solely in terms of
a vertex's immediate predecessors, as shown in \figref{strand-id-fork-join}.
When a vertex $u$ forks into $v$ and $w$, the new vertices have depth
$\dagdepth {v} = \dagdepth {w} = 1 + \dagdepth u$, and their
paths are extended with $\pathL$ and $\pathR$; that is,
$\forkpath {v} = \forkpath u \concat \pathL$ and
$\forkpath {w} = \forkpath u \concat \pathR$
(where $\concat$ denotes concatenation).
When two vertices $u$ and $v$ join, they must be siblings, i.e. their
paths must be $p \concat \pathL$ and $p \concat \pathR$ for some common prefix $p$.
The new vertex then has path $p$ and depth
$1 + \max( \dagdepth {u}, \dagdepth {v} )$.
An example of a full dag is shown in \figref{dag-paths-depths}, where
we write ``---'' for the empty path.

\begin{figure}
\begin{minipage}{0.35\textwidth}
\input{fig/fig-series-parallel}
\end{minipage}
\hfill
\begin{minipage}{0.32\textwidth}
\input{fig/fig-strand-id-fork-join}
\end{minipage}
\hfill
\begin{minipage}{0.25\textwidth}
\input{fig/fig-dag-paths-depths}
\end{minipage}
\end{figure}

Intuitively, it's helpful to think of the dag-depth and fork-path as two
``coordinates'' into a dag.
When illustrating a dag with edges pointing down, the dag-depth serves as a
vertical position, and the fork-path is a horizontal position.
Together, these uniquely identify a vertex.%
\footnote{We leave this to intuition, and
prove a more general result in \lemref{precedes-lca}.}

\subsection{Identifying Precedence}
When two vertices have the same fork-path, one must precede the other
in the dag.
More generally, whenever one vertex's fork-path is a prefix of another, one of
the two vertices must precede the other.
This precendence can be determined by comparing their dag-depths,
as stated in \lemref{precedes-lca}.

\begin{lemma}
\label{lem:precedes-lca}
Let $u$ and $v$ be any two vertices in dag $G$ such that
$\forkpath u$ is a prefix of $\forkpath v$.
Then (1) $u \precedes_G v$ if and only if $\dagdepth u \le \dagdepth v$,
and (2) $v \precedes_G u$ if and only if $\dagdepth v \le \dagdepth u$.
\end{lemma}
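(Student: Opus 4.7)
I would prove part (1); part (2) follows by the same argument with the roles of $u$ and $v$ reversed.

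The forward direction $u \precedes_G v \Rightarrow \dagdepth{u} \le \dagdepth{v}$ is immediate from the construction rules in \figref{strand-id-fork-join}: every edge strictly increases dag-depth (at a fork, the child has depth $1 + \dagdepth{u}$; at a join, the new vertex has depth $1 + \max(\dagdepth{u}, \dagdepth{v})$). Thus any non-trivial path from $u$ to $v$ yields $\dagdepth{u} < \dagdepth{v}$, while the trivial path ($u = v$) yields equality.

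For the reverse direction, the crucial intermediate claim is a \emph{comparability lemma}: whenever $\forkpath{u}$ is a prefix of $\forkpath{v}$, either $u \precedes v$ or $v \precedes u$. Granted this, the reverse direction is an easy consequence: if $\dagdepth{u} \le \dagdepth{v}$, we cannot have $v \precedes u$ with $v \ne u$ (otherwise the forward direction forces $\dagdepth{v} < \dagdepth{u}$), and so $u \precedes v$ (with $u = v$ handled trivially).

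I would prove comparability by induction on $k = |\forkpath{v}| - |\forkpath{u}|$. The base case $k = 0$ treats vertices sharing the same fork-path $p$: these are exactly the sequential strands of task $p$, which by construction form a totally ordered chain in the dag, since consecutive strands are connected through an intervening parallel subcomputation. For the inductive step ($k \ge 1$), write $\forkpath{v} = p \concat b \concat t$ with $p = \forkpath{u}$ and $b \in \{\pathL, \pathR\}$. Vertex $v$ lies inside one of the parallel subcomputations of task $p$, which is bracketed in the dag by a fork-strand $f$ and a join-strand $j$ both with path $p$; by the series-parallel structure of that subcomputation, every vertex inside it (including $v$) satisfies $f \precedes v \precedes j$. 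Since $u$ also has path $p$, the base case places $u$ on the totally ordered chain of path-$p$ strands that contains $f$ and $j$. If $u$ comes at-or-before $f$ on this chain, then $u \precedes v$; if $u$ comes at-or-after $j$, then $v \precedes u$. The remaining case is vacuous, because no path-$p$ strand lies strictly between $f$ and $j$: that region of the dag is entirely occupied by the parallel subcomputation.

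The main obstacle is the structural claim underpinning the inductive step, namely that $v$ really does lie inside a uniquely determined parallel subcomputation of task $p$ bracketed by a fork/join pair on the path-$p$ chain, and that this subcomputation forms a series-parallel sub-dag with the needed source/sink behavior (so that $f \precedes v \precedes j$). I would formalize this with a separate induction on the construction of $G$ via the fork/join rules of \figref{strand-id-fork-join}, tracking how extending a fork-path by one bit corresponds exactly to descending one level into a parallel subcomputation.
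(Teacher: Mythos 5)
Your plan is sound, and it takes a genuinely different route from the paper. The paper proves the reverse direction by a single structural induction on the series-parallel decomposition of $G$ (single vertex, serial composition, parallel composition), carrying the depth hypothesis $\dagdepth u \le \dagdepth v$ through the induction: depth resolves the serial case, and the prefix assumption kills the cross-parallel case. You instead factor the statement into (i) a comparability lemma --- if $\forkpath u$ is a prefix of $\forkpath v$ then $u$ and $v$ are ordered one way or the other --- and (ii) the observation that strict depth increase along nontrivial paths then forces the direction; this is a clean and reusable decomposition, and the bracketing argument ($f \precedes v \precedes j$ with $f, j$ on the chain of path-$p$ vertices, and no path-$p$ vertex strictly between them) is correct. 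What your route buys is a sharper intermediate statement (comparability independent of depths); what it costs is that the real content is pushed into the two structural facts you defer --- that equal-path vertices form a totally ordered chain, and that a vertex whose path strictly extends $p$ sits inside a parallel subcomputation attached to the rest of the dag only through its fork and join vertices --- and the ``separate induction on the construction of $G$'' you promise for these is essentially the induction the paper performs directly, so the total effort is comparable. One caution for the write-up: your base-case gloss ``these are exactly the sequential strands of task $p$'' is slightly off, because fork-paths are reused --- after a join, a later fork recreates the same tree position, so vertices sharing fork-path $p$ can belong to different task instances, and ordering them goes through vertices with shorter prefixes of $p$; hence the chain claim itself needs the structural (or path-length) induction rather than just ``consecutive strands of one task,'' though it is true and your deferred induction can cover it.
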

\begin{proof}
We will only present a proof of (1), as (2) is symmetric.
The forward direction is straightforward: if $u \precedes_G v$, then there
exists a path in the dag from $u$ to $v$, so $\dagdepth u \le \dagdepth v$.
To prove the reverse direction, we induct over the structure of $G$.
Consider $u$ and $v$ such that
$\forkpath u$ is a prefix of $\forkpath v$ and
$\dagdepth u \le \dagdepth v$.
There are three cases for the structure of $G$:
\begin{itemize}
\item
If $G$ is a single vertex, the lemma holds trivially.

\item
Suppose $G$ is the serial composition of $G_1$
and $G_2$.
If $u$ and $v$ either both occur in $G_1$ or both occur in
$G_2$, then $u \precedes_G v$ by induction.
Otherwise, due to $\dagdepth u \leq \dagdepth v$, we must have that $u$ occurs
in $G_1$ and $v$ occurs in $G_2$.
Consider the vertex $r$ where the two sub-dags overlap.
This vertex $r$ is the sink of $G_1$, so $u \precedes_{G_1} r$.
Similarly, $r$ is the source of $G_2$, so $r \precedes_{G_2} v$.
Together, these imply $u \precedes_G v$.

\item
Suppose $G$ is the parallel composition of $G_1$
and $G_2$.
If $u$ is the source vertex of $G$ or if $v$ is the sink vertex
of $G$, then clearly we have $u \precedes_G v$.
If $u$ and $v$ either both occur in $G_1$ or both occur in
$G_2$, then $u \precedes_G v$ by induction.
Otherwise, suppose WLOG that $u$ occurs in $G_1$ and
$v$ occurs in $G_2$.
This violates the assumption that $\forkpath u$ is a prefix of $\forkpath v$,
because all fork-paths of vertices in $G_1$ begin with
$p \concat \pathR$ and all in $G_2$ begin with
$p \concat \pathL$ (where $p$ is the fork-path of the source of $G$).
\end{itemize}
\end{proof}

A direct application of this lemma reveals that our vertex identifiers do indeed
uniquely identify vertices.

\begin{corollary}
For any two vertices $u$ and $v$, $u = v$ if and only if
$\forkpath u = \forkpath v$ and $\dagdepth u = \dagdepth v$.
\end{corollary}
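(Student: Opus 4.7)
The plan is to derive the corollary as an immediate consequence of \lemref{precedes-lca} together with the fact that $\precedes_G$ is a partial order (and hence antisymmetric), as noted in the preliminaries.

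For the forward direction I would simply observe that $\forkpath \cdot$ and $\dagdepth \cdot$ are functions on vertices, so $u = v$ trivially implies $\forkpath u = \forkpath v$ and $\dagdepth u = \dagdepth v$.

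For the reverse direction, suppose $\forkpath u = \forkpath v$ and $\dagdepth u = \dagdepth v$. Since $\forkpath u$ is (vacuously) a prefix of $\forkpath v$ and $\dagdepth u \le \dagdepth v$, \lemref{precedes-lca} gives $u \precedes_G v$. By the same reasoning with the roles swapped (the prefix relation and the depth inequality both still hold under equality), we obtain $v \precedes_G u$. Antisymmetry of $\precedes_G$ then forces $u = v$.

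The only mild subtlety, and the part I would write out carefully, is the appeal to antisymmetry: since $\precedes_G$ is defined as reachability in the dag $G$, which is acyclic, mutual reachability between two distinct vertices would produce a directed cycle, contradicting acyclicity. Apart from this, the proof is entirely routine; there is no real obstacle, since \lemref{precedes-lca} has already done all the structural work.
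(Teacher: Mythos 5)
Your proof is correct and matches the paper's intent: the paper states the corollary without a written proof, calling it ``a direct application'' of \lemref{precedes-lca}, and your argument (apply the lemma in both directions under equal fork-paths and depths, then conclude by antisymmetry of $\precedes_G$, which follows from acyclicity) is exactly that application spelled out. No gaps.
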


We can also obtain a method for checking precedence
between any two arbitrary vertices.
To do so, we identify a special \defn{critical} vertex, denoted
$\textsf{C}(u, v)$, defined below.
Note that the critical vertex is always well-defined for any pair of vertices,
because for every prefix of a vertex's fork-path there exists a vertex
with less-or-equal dag-depth.

\begin{definition}[Critical Vertex]
Consider any two vertices $u$ and $v$, and WLOG assume
$\dagdepth u \leq \dagdepth v$.
The critical vertex $\textsf{C}(u,v)$
is defined as the vertex $c$ such that
$\forkpath c = \lcp {\forkpath u} {\forkpath v}$
and $\dagdepth c$ is maximized s.t. $\dagdepth c \leq \dagdepth v$.
\end{definition}

As formalized below in \lemref{precedes}, the critical vertex is special
in that if a path $u \precedes v$ exists, then
$\textsf{C}(u,v)$ will certainly lie on the path.
Additionally, if there is not a path between the two vertices, then the
critical vertex serves as proof that the path does not exist.

\begin{lemma}
\label{lem:precedes}
Let $u$ and $v$ be any two vertices, and let $c = \textsf{C}(u, v)$ be their
critical vertex. Then $u \precedes v$
if and only if $\dagdepth u \leq \dagdepth c \leq \dagdepth v$.
\end{lemma}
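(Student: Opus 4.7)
The plan is to reduce the claim to \lemref{precedes-lca} by pivoting through the critical vertex $c$. Since $\forkpath c = \lcp{\forkpath u}{\forkpath v}$ is a prefix of both $\forkpath u$ and $\forkpath v$, \lemref{precedes-lca} directly translates precedence between $c$ and each of $u, v$ into a comparison of dag-depths. This reduces the lemma to establishing when and why such an intermediate $c$ lies on a path from $u$ to $v$.

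The backward direction is the easy half. Assuming $\dagdepth u \leq \dagdepth c \leq \dagdepth v$, I would apply \lemref{precedes-lca} twice. Because $\forkpath c$ is a prefix of $\forkpath u$ and $\dagdepth u \leq \dagdepth c$, part (2) of \lemref{precedes-lca} yields $u \precedes c$. Because $\forkpath c$ is a prefix of $\forkpath v$ and $\dagdepth c \leq \dagdepth v$, part (1) yields $c \precedes v$. Transitivity of $\precedes$ then gives $u \precedes v$.

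For the forward direction, the bound $\dagdepth c \leq \dagdepth v$ is immediate from the definition of $c$, so all the work is in proving $\dagdepth u \leq \dagdepth c$. My plan is to show the following key intermediate claim: any dag-path from $u$ to $v$ must pass through some vertex $w$ with $\forkpath w = \forkpath c$. Given this, $u \precedes w$ yields $\dagdepth u \leq \dagdepth w$, while $w \precedes v$ yields $\dagdepth w \leq \dagdepth v$; the maximality clause in the definition of $c$ then forces $\dagdepth w \leq \dagdepth c$, chaining into $\dagdepth u \leq \dagdepth c$.

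The main obstacle is proving the existence of such a $w$. I would argue it by analyzing how fork-paths evolve along an edge of the dag: by the construction in \figref{strand-id-fork-join}, each fork-edge appends one bit ($\pathL$ or $\pathR$) to the fork-path and each join-edge removes the last bit. Thus the fork-paths along any dag-path form a walk on the task-tree trie, where each step either extends the current string by one bit or deletes its final bit. Let $\pi = \forkpath c = \lcp{\forkpath u}{\forkpath v}$. If $\forkpath u = \pi$ or $\forkpath v = \pi$, then $u$ or $v$ itself serves as $w$. Otherwise, $\forkpath u = \pi \concat b_u \concat \cdots$ and $\forkpath v = \pi \concat b_v \concat \cdots$ with $b_u \neq b_v$, so the walk begins in one subtree below $\pi$ and ends in the other; a one-bit-at-a-time walk on a trie cannot cross between sibling subtrees without first visiting their common ancestor, so some intermediate fork-path has length exactly $|\pi|$, and because the walk passes continuously through length $|\pi|$ while moving from $\pi$-extending strings to $\pi$-extending strings, that fork-path must be $\pi$ itself. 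The vertex $w$ at that step is the desired one.
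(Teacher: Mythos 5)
Your proposal is correct and takes essentially the same route as the paper: the ``if'' direction is the identical double application of \lemref{precedes-lca} through $c$, and the ``only-if'' direction rests on the same observation that fork-paths along a dag path change by a single push or pop, forcing some vertex $w$ on the path to have fork-path $\lcp{\forkpath u}{\forkpath v}$. Your version is in fact slightly more careful than the paper's, since you bound $\dagdepth w \leq \dagdepth c$ via the maximality clause in the definition of the critical vertex instead of asserting (as the paper does) that the last such vertex on the path is literally $c$, and you spell out the trie-walk argument that the paper leaves implicit.
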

\begin{proof}
We separately consider the ``if'' and ``only-if'' directions.
For the ``if'' direction, suppose
$\dagdepth u \leq \dagdepth c \leq \dagdepth v$.
Then by \lemref{precedes-lca}, we have both $u \precedes c$ and $c \precedes v$,
and therefore $u \precedes v$.
Note that the Lemma is applicable because $\forkpath c$ is a prefix of
both $\forkpath u$ and $\forkpath v$.
For the ``only-if'' direction, suppose $u \precedes v$, i.e. that there exists
a path from $u$ to $v$ in the dag.
Consider the sequence of vertices in order along the path from $u$ to $v$.
For each adjacent pair of vertices, their fork-paths differ only by
either pushing or popping one element at the end.
Therefore, at least one vertex along the path has fork-path
$\lcp {\forkpath u} {\forkpath v}$.
The last such vertex along the path is $c$, as it has maximum dag-depth.
Because $c$ is on the path, and because dag-depths strictly increase along
each path, we have $\dagdepth u \leq \dagdepth c \leq \dagdepth v$.
\end{proof}

\paragraph{Relationship with Task Trees}
Observe that whenever a leaf task suspends (i.e. when it forks),
at that moment its current vertex has larger dag-depth than any task which has
ever had the same fork-path.
The vertices of suspended ancestors are therefore critical vertices for their
children in the task tree, as these vertices cover all possible prefixes
of the fork-path and have maximum dag-depths for those fork-paths.
In particular, for queries of the form $u \precedes_G v$ where $v$ is the
current vertex of a leaf task (which are exactly the sort of queries
needed for entanglement detection), the critical vertex $\textsf{C}(u,v)$ will
be the current vertex of one of the task's suspended ancestors.

\subsection{Algorithm}

The DePa order maintenance algorithm is presented in
\figref{algo-order-maintenance}.
It consists of three components: forks, joins, and precedence queries.
As described in \lemref{precedes}, precedence queries rely on knowing the
depths of critical vertices.
Accordingly, the algorithm maintains a global lookup data structure called
$\texttt{SuspendedDepths}$ which maps fork-paths of suspended ancestors to
their dag-depths.
(In theory, this could be implemented simply as a hash table; we provide a
detailed discussion of an optimized implementation in practice to
\secref{sched}.)

At each fork, the algorithm initializes new vertex identifiers for the two new
tasks, and updates the $\texttt{SuspendedDepths}$ data structure to
remember the dag-depth of the task that forked (which is now suspended).
At each join, in preparation of resuming the parent task,
the algorithm constructs a new vertex identifier for the parent and
removes the old value from the $\texttt{SuspendedDepths}$ data structure.
At each query, the algorithm computes the longest-common-prefix of the
vertices' fork paths and looks up the appropriate dag-depth to infer
precedence.

\begin{figure}
\input{fig/fig-algo-order-maintenance}
\end{figure}

\begin{theorem}[DePa Correctness]
Let $v$ be the current vertex of some leaf task.
Then for any other vertex $u$, the order maintenance algorithm will
correctly report that $u$ precedes $v$ if and only if $u \precedes v$.
\end{theorem}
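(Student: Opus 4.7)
The plan is to analyze the two return paths of the query procedure separately. Let $p = \lcp{\forkpath{u}}{\forkpath{v}}$, as computed on \lineref{compute-lcp}. Since $p$ is a common prefix of $\forkpath{u}$ and $\forkpath{v}$, it is in particular a prefix of $\forkpath{v}$; the algorithm branches on whether this prefix is proper. The first branch (\lineref{report-path-equal-to-lcp}) handles the case $p = \forkpath{v}$, which reduces directly to \lemref{precedes-lca}; the second branch (\lineref{report-path-neq-to-lcp}) handles the case where $p$ is a strict prefix of $\forkpath{v}$, and will rely on \lemref{precedes} combined with an invariant about the contents of $\texttt{SuspendedDepths}$.

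For the case $p = \forkpath{v}$, the condition tells us that $\forkpath{v}$ is a prefix of $\forkpath{u}$, so applying \lemref{precedes-lca} with the roles of $u$ and $v$ swapped gives $u \precedes v$ iff $\dagdepth{u} \leq \dagdepth{v}$, exactly the boolean returned. For the case where $p$ is a strict prefix of $\forkpath{v}$, I would apply \lemref{precedes} to conclude that $u \precedes v$ iff $\dagdepth{u} \leq \dagdepth{c} \leq \dagdepth{v}$, where $c = \textsf{C}(u,v)$. To connect this to the algorithm's behavior, I must show (i) that $\texttt{SuspendedDepths}[p] = \dagdepth{c}$, and (ii) that the second inequality $\dagdepth{c} \leq \dagdepth{v}$ is automatic, so that checking only $\dagdepth{u} \leq \texttt{SuspendedDepths}[p]$ suffices. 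Step (i) is the main obstacle: it requires arguing that the entry at $p$ was written at the moment the unique ancestor task with fork-path $p$ forked, that this entry has not been overwritten or deleted since (because the task is still suspended, so no join vertex with path $p$ has yet occurred), and that the fork vertex of that ancestor is indeed the critical vertex—because every other vertex along that ancestor's life with path $p$ occurs later in the dag (with larger dag-depth than $v$), while earlier vertices with path $p$ have smaller dag-depth. Step (ii) then follows because the suspended ancestor with fork-path $p$ is a proper ancestor of $v$ in the task tree, so $c \precedes v$ in the dag and hence $\dagdepth{c} < \dagdepth{v}$.

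I would also handle the WLOG assumption baked into the critical-vertex definition ($\dagdepth{u} \leq \dagdepth{v}$) by noting that if $\dagdepth{u} > \dagdepth{v}$, then $u \not\precedes v$, and since $\texttt{SuspendedDepths}[p] \leq \dagdepth{v} < \dagdepth{u}$ in Case 2 (and $\dagdepth{u} > \dagdepth{v}$ directly in Case 1), the algorithm correctly returns false in both branches. The bulk of the intellectual work is in the invariant analysis of $\texttt{SuspendedDepths}$ across the fork and join operations; once that invariant is pinned down, everything else is a mechanical pairing of cases with \lemref{precedes-lca} and \lemref{precedes}.
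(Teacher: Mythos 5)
Your proof is correct and follows essentially the same route as the paper's: the same case split on whether $p = \forkpath{v}$, with the first branch settled by the prefix comparison of depths and the second by \lemref{precedes} applied to the suspended ancestor whose depth is stored in $\texttt{SuspendedDepths}[p]$. Your extra care about the $\texttt{SuspendedDepths}$ invariant and the $\dagdepth{u} > \dagdepth{v}$ situation simply makes explicit details the paper delegates to its ``Relationship with Task Trees'' discussion and to the WLOG in the critical-vertex definition.
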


\begin{proof}
Let $p = \lcp{\forkpath{u}}{\forkpath{v}}$.
There are two cases to consider: when $p = \forkpath v$ and when
$p \neq \forkpath v$.
If $p = \forkpath v$, then $\textsf{C}(u,v) = v$.
In this case, the algorithm reports (\figref{algo-order-maintenance}, \lineref{report-path-equal-to-lcp})
that there is a path according to $\dagdepth u \leq \dagdepth v$.
By \lemref{precedes}, since $\textsf{C}(u,v) = v$, this is correct.
Next, suppose $p \neq \forkpath v$.
Then there is a suspended ancestor whose fork-path is $p$; let $c$ be the
vertex of this ancestor, and note that \code{SuspendedDepths[$p$]} stores
$\dagdepth c$.
We know that $c = \textsf{C}(u,v)$ and also that
$\dagdepth c \leq \dagdepth v$ because $v$ occurs in a descendant task.
Therefore, by \lemref{precedes}, the algorithm correctly reports
(\figref{algo-order-maintenance}, \lineref{report-path-neq-to-lcp}) that there is a
path according to
$\dagdepth u \leq \texttt{SuspendedDepths[p]}$.
\end{proof}

\subsection{Cost Analysis}
\label{sec:cost-analysis}

\begin{lemma}[Vertex Identifier Space]
\label{lem:vertex-space}
Each vertex identifier can be compactly represented using $O(f/\omega)$ words,
where $f$ is the dynamic nesting depth of the corresponding task, and $\omega$
is the word-size.
\end{lemma}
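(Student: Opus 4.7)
The plan is to decompose the vertex identifier into its two constituent pieces---the fork-path $\forkpath{v}$ and the dag-depth $\dagdepth{v}$---and bound each independently.

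First, I would handle the fork-path by a straightforward induction over the fork/join history leading to vertex $v$. The base case is the root vertex, whose fork-path is empty and whose task has dynamic nesting depth $0$. For the inductive step, observe from the algorithm in \figref{algo-order-maintenance} that a fork appends exactly one symbol ($\pathL$ or $\pathR$) to the parent's fork-path to produce the children's fork-paths, and that the children have dynamic nesting depth one greater than the parent; a join pops exactly one symbol and decrements the depth by one. Thus, for any task at dynamic nesting depth $f$, its fork-path has exactly $f$ symbols. Since each symbol is binary, the fork-path is an $f$-bit string, which packs into $\lceil f/\omega \rceil$ machine words.

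Next, I would argue that the dag-depth $\dagdepth{v}$ occupies $O(1)$ words. The dag-depth is a single nonnegative integer bounded by the total length of the longest path in the computation graph, which is at most the total work of the program. Under the standard word-RAM convention $\omega = \Omega(\log W)$, where $W$ is the total computation size (a standard assumption, since otherwise a word cannot even hold an index into the problem), the dag-depth fits in a single machine word.

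Finally, I would combine the two bounds: the total space for a vertex identifier is $\lceil f / \omega \rceil + O(1) = O(f/\omega)$, where the $O(\cdot)$ absorbs the additive constant using the standard convention $O(f/\omega) = O(\lceil f/\omega \rceil + 1)$. I do not anticipate a real obstacle here; the only subtlety is pinning down the model assumption that justifies storing the dag-depth in constant words, and making precise the relationship between ``dynamic nesting depth of a task'' and the length of the vertex's fork-path, both of which follow immediately from the fork/join updates specified in \figref{algo-order-maintenance}.
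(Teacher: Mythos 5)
Your proposal is correct and follows essentially the same route as the paper, which simply observes that the dag-depth fits in one word and the fork-path is an $f$-bit sequence, giving $1 + \lceil f/\omega \rceil$ words in total. The extra detail you supply---the induction showing the fork-path length equals the nesting depth, and the word-RAM assumption justifying one word for the dag-depth---just makes explicit what the paper leaves implicit.
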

\begin{proof}
The dag-depth component requires one word, and the fork-path is a bit
sequence of length $f$. Together, these require $1 + \lceil f/\omega \rceil$
words.
\end{proof}

\begin{theorem}[DePa Space]
\label{thm:space-bound}
At any point in a $P$-processor execution, the algorithm requires
$O(PF + PF/\omega)$ additional space, where $F$ is the maximum dynamic
nesting depth of the computation.
\end{theorem}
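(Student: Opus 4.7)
The plan is to bound the additional space by summing two contributions: the entries of the \texttt{SuspendedDepths} structure, and the vertex identifiers carried by currently running tasks. Before doing so I would fix an upper bound on the number of suspended tasks: at any instant during a $P$-processor execution there are at most $P$ leaf (running) tasks, and each has dynamic nesting depth at most $F$, so each has at most $F$ strict ancestors, all of which are suspended. Summing across the $P$ running tasks yields at most $PF$ suspended ancestors (counted with multiplicity), which is an upper bound on the number of live entries in \texttt{SuspendedDepths}.

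With that cardinality bound in hand, the two terms fall out separately. Each \texttt{SuspendedDepths} entry stores a single dag-depth occupying $O(1)$ words, so the total space for values across $O(PF)$ entries is $O(PF)$ words, yielding the first term of the claimed bound. For the second term I would invoke \lemref{vertex-space}: each vertex identifier at nesting depth $f$ uses $O(f/\omega)$ words, and summing over the at-most-$P$ running tasks, each with $f \le F$, contributes $O(PF/\omega)$ words of state for live running tasks.

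The main obstacle is accounting for the fork-path \emph{keys} inside \texttt{SuspendedDepths}. A naive hash table keyed on full fork-paths stores each key explicitly using up to $O(F/\omega)$ words, which could blow up to $O(PF^2/\omega)$ across all entries. To achieve the stated bound, the keys must be stored with sharing --- for instance via a trie whose structure mirrors the active task tree, so that the aggregate key storage is proportional to the size of the active task tree rather than to the sum of individual path lengths --- or alternatively by representing keys implicitly through pointers into the task tree itself. Reconciling this accounting with the concrete implementation sketched in \secref{sched} is the main subtle step in completing the proof.
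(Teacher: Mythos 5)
Your proposal is correct and takes essentially the same route as the paper: the paper's proof likewise splits the space into (1) the vertex identifiers of the $O(P)$ running tasks, each $O(F/\omega)$ words by \lemref{vertex-space}, and (2) the \texttt{SuspendedDepths} structure, bounded by noting that each of the at most $P$ running tasks has at most $F$ suspended ancestors, giving $O(PF)$ entries. The one point where you go further is the accounting of the fork-path \emph{keys}: the paper's proof counts only the number of entries and is silent about key storage, so the loose end you flag belongs to the published argument rather than to yours (with explicit keys one indeed gets $O(PF + PF^2/\omega)$, which exceeds the claimed bound when $F$ grows beyond the word size; when $F \le \omega$ the naive table already fits). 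It is resolved exactly as you anticipate: in the implementation of \secref{sched}, \texttt{SuspendedDepths} is a per-processor array of suspended depths indexed by nesting level along that processor's root-to-leaf path, so keys are never materialized and the structure occupies $O(PF)$ words; alternatively, since every key is a prefix of some running task's already-stored fork-path, a (pointer, length) representation costs $O(1)$ extra words per entry. So there is no gap --- your treatment is, if anything, more careful than the paper's own proof.
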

\begin{proof}
There are two sources of space overhead: (1) each running task stores its
current vertex identifier, and (2) the $\texttt{SuspendedDepths}$ data
structure.
We assume here that the scheduler guarantees at most $O(P)$ running tasks
at any moment (typical schedulers such as work-stealing have this guarantee).
Each of these stores a vertex identifier, requiring $O(P F/\omega)$ space.
In the $\texttt{SuspendedDepths}$ data structure, each running task has
at most $F$ suspended ancestors, therefore the size of
$\texttt{SuspendedDepths}$ at any moment is at most $O(PF)$.
\end{proof}

\begin{theorem}[DePa Query Cost]
\label{thm:query-cost}
A query between vertices $u$ and $v$ costs
$O(\min(f_u,f_v)/\omega)$ work, where $f_u$ and $f_v$ are the
dynamic nesting depths of $u$ and $v$, respectively.
\end{theorem}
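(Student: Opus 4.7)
The plan is to walk through the query procedure in \figref{algo-order-maintenance} line-by-line and bound the work of each step by $O(\min(f_u,f_v)/\omega)$. There are essentially four sub-operations to account for: (i) computing $p = \lcp{\forkpath u}{\forkpath v}$, (ii) testing whether $p = \forkpath v$, (iii) looking up \texttt{SuspendedDepths[$p$]}, and (iv) comparing dag-depths. The dag-depth comparison on line~\ref{line:report-path-equal-to-lcp}/\ref{line:report-path-neq-to-lcp} is clearly $O(1)$ since each depth fits in a single word by \lemref{vertex-space}.

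For the LCP computation, I would argue that scanning the two fork-paths word-by-word, stopping at either the first word in which they differ or at the end of the shorter path, takes at most $\lceil \min(f_u,f_v)/\omega\rceil = O(\min(f_u,f_v)/\omega)$ work. Once the first differing word is located, isolating the longest common prefix within that word uses $O(1)$ bitwise operations (e.g.\ XOR followed by a leading-zero count). The subsequent equality test $p = \forkpath v$ is a comparison of two bit sequences whose lengths are both at most $\min(f_u,f_v)$, so it also costs $O(\min(f_u,f_v)/\omega)$ work.

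The one step that requires care is the hash-table lookup \texttt{SuspendedDepths[$p$]}. Here I would rely on the assumption stated in \secref{sp-order} that \texttt{SuspendedDepths} is implemented as a hash table keyed on fork-paths. Hashing a bit sequence of length $|p| \le \min(f_u,f_v)$ takes $O(\min(f_u,f_v)/\omega)$ work with a standard word-parallel hash function, and comparing $p$ against each candidate key found during the lookup also costs $O(\min(f_u,f_v)/\omega)$ per comparison. Under standard hashing assumptions, the expected number of such comparisons is $O(1)$, so the expected cost of the lookup is $O(\min(f_u,f_v)/\omega)$.

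Summing the four bounds yields the claimed $O(\min(f_u,f_v)/\omega)$ work per query. The main obstacle is the hash-table step: the clean $O(\min(f_u,f_v)/\omega)$ bound is expected rather than worst-case, and depends on the chosen implementation of \texttt{SuspendedDepths}. I would either note this explicitly (referring forward to the optimized implementation in \secref{sched}), or observe that all other steps already meet the bound deterministically, so the analysis goes through once the lookup meets it in the same sense.
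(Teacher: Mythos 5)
Your proposal is correct and follows essentially the same route as the paper's proof: bound the word-by-word LCP computation by $O(\min(f_u,f_v)/\omega)$, charge the \texttt{SuspendedDepths} lookup to a hash table within the same bound, and note that the remaining comparisons are constant work. Your explicit remark that the hash-table step is an expected-time (rather than worst-case) bound is a fair refinement of a detail the paper's proof glosses over, but it does not change the argument.
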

\begin{proof}
Using the representation of \lemref{vertex-space}, computing the longest common
prefix between fork-paths requires $O(\min(f_u,f_v)/\omega)$ work.
We can perform the \texttt{SuspendedDepths} lookup within the same bound
by representing it as a hash table. The rest of the operations require
constant work.
\end{proof}

\begin{theorem}[DePa Fork and Join Cost]
\label{thm:work-and-span}
DePa requires $O(1)$ work and span at each fork and join to update depths and
paths.
\end{theorem}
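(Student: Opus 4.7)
My plan is to verify each line of the fork and join pseudocode runs in $O(1)$ work given a suitable representation of fork-paths and of \texttt{SuspendedDepths}. All the dag-depth manipulations (the increments at a fork and the $\max + 1$ at a join) are trivially $O(1)$ since each dag-depth fits in a single word, so the only potentially nontrivial lines are the path extensions/truncations and the \texttt{SuspendedDepths} insertion/deletion.

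For the path updates, I would represent each fork-path as a persistent, singly-linked chain of single-bit extensions: a fork-path consists of a pointer to its immediate predecessor path, one extra bit, and a cached length. Then extending $\forkpath{u}$ by $\pathL$ or $\pathR$ at a fork allocates one new cell whose back-pointer is $\forkpath{u}$, and stripping the trailing $\pathL$ at a join is a single pointer dereference. Each of these is $O(1)$. Crucially, this representation does not affect the query bound of \thmref{query-cost}, since the longest-common-prefix routine already must read $O(\min(f_u,f_v)/\omega)$ bits of each path.

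The main obstacle, and the one I would treat most carefully, is the \texttt{SuspendedDepths} update: with a naive hash table, hashing a length-$f$ key would by itself cost $O(f/\omega)$. I plan to resolve this in one of two equivalent ways. Either I store the \texttt{SuspendedDepths} value of a suspended ancestor directly as a field on its task-tree node, so that the insertion at \lineref{suspended-depths-insert} and its matching deletion at the join become single field writes on the parent task object already in hand; or I maintain a rolling hash along each active task's fork-path, so that the hash of $\forkpath{u}$ is obtained in $O(1)$ from the hash of its parent at each fork and reversed at each join. Either option yields $O(1)$ work per insertion/deletion, and the task-tree option is essentially the optimization developed in \secref{sched}. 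Finally, since each fork or join performs only a constant number of these $O(1)$ updates, all executed sequentially by the single task doing the fork/join, the span at each fork/join is $O(1)$ as well.
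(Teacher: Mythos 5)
Your proposal matches the paper's proof in essentials: the paper likewise represents fork-paths as a linked list (of words) so path updates at forks and joins are $O(1)$, and it handles the \texttt{SuspendedDepths} update by having each vertex store its own hash, which is exactly your rolling-hash option (your alternative of stashing the depth on the task-tree node is the \secref{sched} optimization, also fine). The one caveat is your side claim that a one-bit-per-cell chain leaves \thmref{query-cost} untouched---with single-bit cells the path storage and LCP traversal degrade to $O(\min(f_u,f_v))$ rather than $O(\min(f_u,f_v)/\omega)$, which is why the paper packs $\omega$ bits per list cell; this does not affect the $O(1)$ fork/join bound you are proving here.
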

\begin{proof}
By representing paths as a linked list of words, we can update paths at
forks and joins with $O(1)$ work. The \texttt{SuspendedDepths} data structure
can be implemented as a hash table with paths as keys. Each vertex can
additionally store its own hash, so that insertions and deletions (at forks
and joins) are $O(1)$.
\end{proof}

\section{Integration with Work-Stealing}
\label{sec:sched}
For SP-order maintenance to be fast in practice, it is important not
only to optimize individual checks, but also to make sure that forks
and joins are fast.
Much of the work of our order-maintenance algorithm
at forks and joins is ``local'' in the sense that it only requires accessing
the data of a task and its children, which should be fast in practice
and parallelize well.
There is one operation, however, which could potentially incur a significant
overhead: updating the \texttt{SuspendedDepths} lookup data
structure of the DePa algorithm (\figref{algo-order-maintenance}).

In the cost theorem (\thmref{work-and-span}), we assume that the
\texttt{SuspendedDepths} data-structure provides $O(1)$ updates and
deletions, with no contention.
In practice, a straightforward implementation such as a global shared hash
table may incur significant contention on concurrent accesses.
In this section, we describe how to improve efficiency by integrating the
\texttt{SuspendedDepths} data structure with a work-stealing scheduler.
The basic idea is avoid contention with a ``copy-on-steal'' approach, where
each processor locally stores the portion of the data-structure it needs.

% \paragraph{Background on Work-Stealing}
% The standard work-stealing algorithm is
% based on doubly-ended task queues, a.k.a. \emph{deques}.
% %
% Each processor has its own deque, and at every moment, each processor is either
% stealing (looking for work) or currently working on a task.
% %
% When a working processor encounters a fork, it pushes one of the new child tasks
% onto the bottom%
% \footnote{Traditionally, the two ends of a work-stealing deque
% are the ``top'' and ``bottom.'' Each processor works locally on the bottom,
% and steals take from the top.}
% %
% of its deque and begins executing the other child.
% %
% When a working processor finishes a task, there are two possibilities.
% %
% (1) If the finished task is the last of its siblings to complete, then the
% processor resumes the parent task and continues working.
% %
% (2) Otherwise, the processor attempts to pop off a task from the bottom of its
% deque.
% %
% If successful, the processor executes the popped task.
% %
% If not successful (because the deque is empty), the processor switches to
% \emph{stealing}: it randomly attempts to steal from the \emph{top} of other
% processors' deques until it manages to steal a task, at which point it
% starts working on that task.

\paragraph{Implementing \texttt{SuspendedDepths}}
Throughout execution, each processor's current task is a leaf in the
task tree.
A key observation is that, for our order-maintenance algorithm, each processor
only needs access to the dag-depths of tasks along its root-to-leaf path in the
tree.
Therefore, we can implement \texttt{SuspendedDepths} in a distributed fashion,
using one array stored locally on each processor, where each array (of
length up to $F$, the maximum nesting depth) stores a root-to-leaf path of suspended
depths.

Each processor-local array operates like a stack.
Immediately before each fork, the current processor pushes the current
dag-depth onto its local array.
Immediately after each join, the current processor pops one element.
To guarantee that processors locally have the suspended depths of all
ancestor tasks, we copy a small amount of data on each steal:
when a processor steals a task that has $K$ ancestors, it copies $K$
suspended depths from the victim into its own local array.
This requires $O(F)$ work on each steal, which in practice is effectively
free, because $F$ is typically small, and the cost of steals is often
well-amortized with appropriate granularity control.

\section{Related Work}
\label{sec:related}
In 1974, Brent proved that a parallel program with total work $W$ and
span (depth) $S$ can be executed on $P$-processor, in
$T_P \leq W/P + S$ time.
This seminal result has underpinned decades of research on scheduling
algorithms for task parallelism.
This research have shown that this bound is not purely theoretical and
can in fact be realized efficiently in
practice~\cite{burtonsl81,halstead85,frigolera98,blumofele99,narlikarbl99,abp-multi-01,abb02,frs-schedule-04,acarchra13,acarchra16oracle},
usually by using a variant of work stealing.
Recent work has also made some progress extending this scheduling
theory to account for latency~\cite{ma-latency-2016} and competition
among tasks, by allowing tasks to declare
priorities~\cite{mwa-fairness-2019,singer+priority-2020}.

The advances on scheduling algorithms it turn lead to the development
of many task-parallel programming languages and systems in both
functional and procedural languages.

An important concern is all these task-parallel systems is race detection.
Because data races usually cause incorrect behavior, especially at
scale~\cite{adve-races-2010,boehm-miscompile-2011}, there has been
much work on detecting races in task parallel, such as fork-join
programs, have been
proposed~\cite{Mellor-Crummey91,FengLe97,ChengFeLe98,BenderFiGi04,Fineman05,RamanZhSa10,RamanZhSa12,uaf+race-2016,xsl-determinacy-2020}.
These algorithms all revolve around an ordering data structure,
%sometimes called a \defn{series-parallel order} or \defn{SP-order}
%data structure,
that allows determining whether two instructions are
sequentially dependent or can be executed in parallel.
The basic idea is to check at each memory access whether the
instruction performing the access creates a race with the instructions
that accessed the same location in the past by using the ordering data
structure.
The theoretical and practical efficiency of these algorithms therefore
critically depend on the series-parallel order data structure.
Of the prior approaches, some do not guarantee constant work/time
bounds~\cite{Mellor-Crummey91,RamanZhSa10,RamanZhSa12,ljs-enforcement-2014}.
Others run only sequentially~\cite{FengLe97,ChengFeLe98} or limit
parallelism with non-constant overheads~\cite{BenderFiGi04}.
More recent work can guarantee constant overheads by carefully
integrating the ordering data structure and the scheduler but this
comes with significant complexity in managing
concurrency~\cite{uaf+race-2016}.
All of the above work considers nested parallelism with fork-join and
async-finish constructs, which result in similar dependency
structures.
More recent work considers race detection for futures and establishes
worst-case bounds, though the overheads are no longer
constant~\cite{xsl-determinacy-2020}.

In comparison to this related work, our DePa algorithm is perhaps most
similar to Mellor-Crummey's \emph{offset-span} vertex labeling
technique~\cite{Mellor-Crummey91}.
An offset-span labeling for a vertex $v$ requires $O(f_v)$ space, where
$f_v$ is the dynamic nesting depth of that vertex; similarly, the worst-case
time required for a query is $O(F)$ where $F$ is the maximum dynamic nesting
depth.
These bounds may at first appear essentially identical to those achieved by
DePa, if one ignores the factor $\omega$ (the word-size) in our bounds.
However, we specifically included this factor in the bounds as it contributes
to a significant savings in practice.
For any vertex $v$, the offset-span label of $v$ requires approximately
$2 f_v$ words; in contrast, the DePa label for the same vertex requires
only $1 + \lceil f_v/\omega \rceil$ words.
For typical word sizes (e.g. $\omega = 64$), this is between one and two orders
of magnitude improvement in space.
The cost of queries is likely also reduced, although a direct comparison
will be needed to determine the exact difference in overhead.

Race detection has also been studied extensively in the more general
concurrency setting, where programs countain a small number of
coarse-grained threads that may synchronize by using locks,
synchronization variables, etc.
Techniques from this second line of work do not scale to task-parallel
programs, because of their coarse-grained threads assumption
(e.g,.~\cite{RamanZhSa12}), and is less directly relevant to this
paper.

Early work on coarse-grained threads proposes the lock-set
algorithm~\cite{SavageBuNe97}, which can lead to false positives.
Subsequent work proposed precise techniques by using vector-clocks to
capture the happens-before relations between
threads \cite{FlanaganFr09}.
Followup work has proposed hybrid approaches that combine lock sets
and vector clocks, trading off efficiency and
precision~\cite{cc-hybrid-2003,yrc-2005-racetrack}.
Most approaches use dynamic, on-the-fly race detection though there
has also been some work on predicting data
races~\cite{ses+sound-2012,kmv-2017-dynamic}.

Dynamic race-detection techniques in coarse-grained multithreaded
programs are typically sensitivity to scheduling decisions: because
they track actual threads, the techniques detect races in the run
being checked and remain sensitive to scheduling decisions.
In contrast, prior race detection techniques for task parallel
programs are able to account for logical (potentially unrealized)
parallelism~\cite{Mellor-Crummey91,FengLe97,ChengFeLe98,BenderFiGi04,RamanZhSa10,RamanZhSa12,ljs-enforcement-2014,uaf+race-2016}.

\paragraph{Pedigrees and DPRNGs}
Any scheme which uniquely identifies individual ``vertices'' of a computation
graph while ignoring details of scheduling is known as
a \defn{pedigree} scheme~\cite{LeisersonScSu12}.
Pedigrees are potentially useful for a variety of applications.
One such application is a
\textbf{d}eterministic \textbf{p}arallel
\textbf{r}andom-\textbf{n}umber \textbf{g}enerator
(or \defn{DPRNG}, for short), where pedigrees
can be used as seeds for the generator.
For example, Leiserson et al.~\cite{LeisersonScSu12} describe a particular
pedigree scheme and use these pedigrees to implement a DPRNG with low overhead
in practice.
DPRNGs make it possible
to write parallel randomized algorithms with repeatable behavior:
because pedigrees are independent of scheduler decisions, this approach
is deterministic under parallel execution.
This kind of repeatable behavior is a form of
\emph{internal determinism}, which
provides a number of benefits, such as simplifying the process of debugging
and performance tuning~\cite{bfgs12-pbbs}.

Because DePa provides a unique labeling of vertices within a computation graph,
it could be used as an efficient pedigree scheme for a DPRNG.
In comparison to the pedigree scheme of Leiserson et al.
(\cite{LeisersonScSu12}), one potential advantage of DePa is in reducing the
size of pedigrees; specifically, DePa labels are approximately a factor
$\omega$ smaller, where $\omega$ is the word-size of the machine.

\section{Conclusion}
\label{sec:conc}
We present \defn{DePa}, a novel solution to the so-called \emph{SP maintenance}
problem which is key to the efficiency of techniques such as dynamic
race detection.
The gist of our approach is to represent a computation as a graph, where
each vertex is labeled with both is \emph{depth} in the graph as well as
the \emph{fork path} of its corresponding nested task.
This encoding is compact and enables fast precendence queries between
vertices.
DePa therefore provides a number of advantages in comparison to prior
techniques:
(1) it is provably efficient with low overhead in terms of both space and time,
(2) it is fully parallel,
and
(3) it is simple to implement.

\clearpage

%% Bibliography
\bibliography{local,../../../bibliography/main,../../../bibliography/new,../../../bibliography/leiserson,../../../bibliography/gc}

\newcommand{\noopsort}[1]{} \newcommand{\singleletter}[1]{#1} \punt{ Uncomment
  the following lines for short conference/journal names @String{SODA = {SODA}}
  @String{JACM = {Journal of the ACM}} @String{SPAA = {SPAA}} @String{PPoPP =
  {PPoPP}} @String{PLDI = {PLDI}} @String{STOC = {STOC}} @String{FOCS = {FOCS}}
  @String{ESA = {ESA}} @String{ALP = {Colloquium on Automata, Languages, and
  Programming}} @String{SWAT = {SWAT}} @String{JALGO = {Journal of Algorithms}}
  @String{PODC = {PODC}} @String{LNCS = {LNCS}} @String{SUPERCOMP =
  {Supercomputing}} @String{ICCSE = {Israeli Conference on Computer Systems
  Engineering}} @String{CMD = {Conference on Management of Data}} }
\begin{thebibliography}{10}

\bibitem{abb02}
Umut~A. Acar, Guy~E. Blelloch, and Robert~D. Blumofe.
\newblock The data locality of work stealing.
\newblock {\em Theory of Computing Systems}, 35(3):321--347, 2002.

\bibitem{acarchra13}
Umut~A. Acar, Arthur Chargu{\'e}raud, and Mike Rainey.
\newblock Scheduling parallel programs by work stealing with private deques.
\newblock In {\em Proceedings of the 19th ACM SIGPLAN Symposium on Principles
  and Practice of Parallel Programming}, PPoPP '13, 2013.

\bibitem{acarchra16oracle}
Umut~A. Acar, Arthur Chargu{\'{e}}raud, and Mike Rainey.
\newblock Oracle-guided scheduling for controlling granularity in implicitly
  parallel languages.
\newblock {\em Journal of Functional Programming (JFP)}, 26:e23, 2016.

\bibitem{acrs-dag-calculus-2016}
Umut~A. Acar, Arthur Chargu{\'e}raud, Mike Rainey, and Filip Sieczkowski.
\newblock Dag-calculus: A calculus for parallel computation.
\newblock In {\em Proceedings of the 21st ACM SIGPLAN International Conference
  on Functional Programming}, ICFP 2016, pages 18--32, 2016.

\bibitem{adve-races-2010}
Sarita~V. Adve.
\newblock Data races are evil with no exceptions: technical perspective.
\newblock {\em Commun. {ACM}}, 53(11):84, 2010.

\bibitem{awa+space-2021}
Jatin Arora, Sam Westrick, and Umut~A. Acar.
\newblock Provably space efficient parallel functional programming.
\newblock In {\em Proceedings of the 48th Annual {ACM} Symposium on Principles
  of Programming Languages (POPL)"}, 2021.

\bibitem{abp-multi-01}
Nimar~S. Arora, Robert~D. Blumofe, and C.~Greg Plaxton.
\newblock Thread scheduling for multiprogrammed multiprocessors.
\newblock {\em Theory of Computing Systems}, 34(2):115--144, 2001.

\bibitem{BenderFiGi04}
Michael~A. Bender, Jeremy~T. Fineman, Seth Gilbert, and Charles~E. Leiserson.
\newblock On-the-fly maintenance of series-parallel relationships in fork-join
  multithreaded programs.
\newblock In {\em 16th Annual ACM Symposium on Parallel Algorithms and
  Architectures}, pages 133--144, 2004.

\bibitem{bad+parlaylib-2020}
Guy~E. Blelloch, Daniel Anderson, and Laxman Dhulipala.
\newblock Parlaylib - {A} toolkit for parallel algorithms on shared-memory
  multicore machines.
\newblock In Christian Scheideler and Michael Spear, editors, {\em {SPAA} '20:
  32nd {ACM} Symposium on Parallelism in Algorithms and Architectures, Virtual
  Event, USA, July 15-17, 2020}, pages 507--509. {ACM}, 2020.

\bibitem{bfgs12-pbbs}
Guy~E. Blelloch, Jeremy~T. Fineman, Phillip~B. Gibbons, and Julian Shun.
\newblock Internally deterministic parallel algorithms can be fast.
\newblock In {\em PPoPP '12}, pages 181--192, 2012.

\bibitem{blumofe96}
Robert~D. Blumofe, Christopher~F. Joerg, Bradley~C. Kuszmaul, Charles~E.
  Leiserson, Keith~H. Randall, and Yuli Zhou.
\newblock Cilk: An efficient multithreaded runtime system.
\newblock {\em Journal of Parallel and Distributed Computing}, 37(1):55 -- 69,
  1996.

\bibitem{blumofele99}
Robert~D. Blumofe and Charles~E. Leiserson.
\newblock Scheduling multithreaded computations by work stealing.
\newblock {\em J. ACM}, 46:720--748, September 1999.

\bibitem{boehm-miscompile-2011}
Hans{-}Juergen Boehm.
\newblock How to miscompile programs with "benign" data races.
\newblock In {\em 3rd {USENIX} Workshop on Hot Topics in Parallelism,
  HotPar'11, Berkeley, CA, USA, May 26-27, 2011}, 2011.

\bibitem{burtonsl81}
F.~Warren Burton and M.~Ronan Sleep.
\newblock Executing functional programs on a virtual tree of processors.
\newblock In {\em Functional Programming Languages and Computer Architecture
  (FPCA '81)}, pages 187--194. ACM Press, October 1981.

\bibitem{ChengFeLe98}
Guang-Ien Cheng, Mingdong Feng, Charles~E. Leiserson, Keith~H. Randall, and
  Andrew~F. Stark.
\newblock Detecting data races in {Cilk} programs that use locks.
\newblock In {\em Proceedings of the 10th ACM Symposium on Parallel Algorithms
  and Architectures}, SPAA '98, 1998.

\bibitem{frs-schedule-04}
Dror~G. Feitelson, Larry Rudolph, and Uwe Schwiegelshohn.
\newblock Parallel job scheduling - {A} status report.
\newblock In {\em Job Scheduling Strategies for Parallel Processing, 10th
  International Workshop, {JSSPP} 2004, New York, NY, USA, June 13, 2004,
  Revised Selected Papers}, pages 1--16, 2004.

\bibitem{FengLe97}
Mingdong Feng and Charles~E. Leiserson.
\newblock Efficient detection of determinacy races in {Cilk} programs.
\newblock In {\em Proceedings of the Ninth Annual ACM Symposium on Parallel
  Algorithms and Architectures (SPAA)}, pages 1--11, June 1997.

\bibitem{Fineman05}
Jeremy~T. Fineman.
\newblock Provably good race detection that runs in parallel.
\newblock Master's thesis, Massachusetts Institute of Technology, Department of
  Electrical Engineering and Computer Science, Cambridge, MA, August 2005.

\bibitem{FlanaganFr09}
Cormac Flanagan and Stephen~N. Freund.
\newblock Fasttrack: efficient and precise dynamic race detection.
\newblock {\em SIGPLAN Not.}, 44(6):121--133, June 2009.

\bibitem{manticore-implicit-11}
Matthew Fluet, Mike Rainey, John Reppy, and Adam Shaw.
\newblock Implicitly threaded parallelism in {Manticore}.
\newblock {\em Journal of Functional Programming}, 20(5-6):1--40, 2011.

\bibitem{frigolera98}
Matteo Frigo, Charles~E. Leiserson, and Keith~H. Randall.
\newblock The implementation of the {Cilk-5} multithreaded language.
\newblock In {\em PLDI}, pages 212--223, 1998.

\bibitem{gwraf-hieararchical-2018}
Adrien Guatto, Sam Westrick, Ram Raghunathan, Umut~A. Acar, and Matthew Fluet.
\newblock Hierarchical memory management for mutable state.
\newblock In {\em Proceedings of the 23rd {ACM} {SIGPLAN} Symposium on
  Principles and Practice of Parallel Programming, PPoPP 2018, Vienna, Austria,
  February 24-28, 2018}, pages 81--93, 2018.

\bibitem{halstead85}
Robert~H. Halstead.
\newblock Multilisp: a language for concurrent symbolic computation.
\newblock {\em ACM Transactions on Programming Languages and Systems},
  7:501--538, 1985.

\bibitem{threadingbuildingblocksmanual}
Intel.
\newblock Intel threading building blocks, 2011.
\newblock \url{https://www.threadingbuildingblocks.org/}.

\bibitem{kmv-2017-dynamic}
Dileep Kini, Umang Mathur, and Mahesh Viswanathan.
\newblock Dynamic race prediction in linear time.
\newblock In Albert Cohen and Martin~T. Vechev, editors, {\em Proceedings of
  the 38th {ACM} {SIGPLAN} Conference on Programming Language Design and
  Implementation, {PLDI} 2017, Barcelona, Spain, June 18-23, 2017}, pages
  157--170. {ACM}, 2017.

\bibitem{lea00}
Doug Lea.
\newblock A {J}ava fork/join framework.
\newblock In {\em Proceedings of the ACM 2000 conference on Java Grande}, JAVA
  '00, pages 36--43, 2000.

\bibitem{tpl09}
Daan Leijen, Wolfram Schulte, and Sebastian Burckhardt.
\newblock The design of a task parallel library.
\newblock In {\em Proceedings of the 24th ACM SIGPLAN conference on Object
  Oriented Programming Systems Languages and Applications}, OOPSLA '09, pages
  227--242, 2009.

\bibitem{LeisersonScSu12}
Charles~E. Leiserson, Tao~B. Schardl, and Jim Sukha.
\newblock Deterministic parallel random-number generation for
  dynamic-multithreading platforms.
\newblock In {\em Proceedings of the ACM SIGPLAN Symposium on Principles and
  Practice of Parallel Programming}, PPoPP '12, 2012.

\bibitem{ljs-enforcement-2014}
Li~Lu, Weixing Ji, and Michael~L. Scott.
\newblock Dynamic enforcement of determinism in a parallel scripting language.
\newblock In Michael F.~P. O'Boyle and Keshav Pingali, editors, {\em {ACM}
  {SIGPLAN} Conference on Programming Language Design and Implementation,
  {PLDI} '14, Edinburgh, United Kingdom - June 09 - 11, 2014}, pages 519--529.
  {ACM}, 2014.

\bibitem{Mellor-Crummey91}
John Mellor-Crummey.
\newblock On-the-fly detection of data races for programs with nested fork-join
  parallelism.
\newblock In {\em Proceedings of Supercomputing'91}, pages 24--33, 1991.

\bibitem{ma-latency-2016}
Stefan~K. Muller and Umut~A. Acar.
\newblock Latency-hiding work stealing: Scheduling interacting parallel
  computations with work stealing.
\newblock In {\em Proceedings of the 28th {ACM} Symposium on Parallelism in
  Algorithms and Architectures, {SPAA} 2016, Asilomar State Beach/Pacific
  Grove, CA, USA, July 11-13, 2016}, pages 71--82, 2016.

\bibitem{mah-responsive-2017}
Stefan~K. Muller, Umut~A. Acar, and Robert Harper.
\newblock Responsive parallel computation: Bridging competitive and cooperative
  threading.
\newblock In {\em Proceedings of the 38th ACM SIGPLAN Conference on Programming
  Language Design and Implementation}, PLDI 2017, pages 677--692, New York, NY,
  USA, 2017. ACM.

\bibitem{muller+responsive-2020}
Stefan~K. Muller, Kyle Singer, Noah Goldstein, Umut~A. Acar, Kunal Agrawal, and
  I{-}Ting~Angelina Lee.
\newblock Responsive parallelism with futures and state.
\newblock In Alastair~F. Donaldson and Emina Torlak, editors, {\em Proceedings
  of the 41st {ACM} {SIGPLAN} International Conference on Programming Language
  Design and Implementation, {PLDI} 2020, London, UK, June 15-20, 2020}, pages
  577--591.

\bibitem{mwa-fairness-2019}
Stefan~K. Muller, Sam Westrick, and Umut~A. Acar.
\newblock Fairness in responsive parallelism.
\newblock In {\em Proceedings of the 24th ACM SIGPLAN International Conference
  on Functional Programming}, ICFP 2019, 2019.

\bibitem{narlikarbl99}
Girija~J. Narlikar and Guy~E. Blelloch.
\newblock Space-efficient scheduling of nested parallelism.
\newblock {\em ACM Transactions on Programming Languages and Systems}, 21,
  1999.

\bibitem{cc-hybrid-2003}
Robert O'Callahan and Jong{-}Deok Choi.
\newblock Hybrid dynamic data race detection.
\newblock In Rudolf Eigenmann and Martin~C. Rinard, editors, {\em Proceedings
  of the {ACM} {SIGPLAN} Symposium on Principles and Practice of Parallel
  Programming, {PPOPP} 2003, June 11-13, 2003, San Diego, CA, {USA}}, pages
  167--178. {ACM}, 2003.

\bibitem{OTU-smlsharp-2018}
Atsushi Ohori, Kenjiro Taura, and Katsuhiro Ueno.
\newblock Making sml{\#} a general-purpose high-performance language, 2018.
\newblock Unpublished Manuscript.

\bibitem{openmp18}
{\em {OpenMP} Application Programming Interface, Version 5.0}, November 2018.
\newblock Accessed in July 2018.

\bibitem{rmab-mm-2016}
Ram Raghunathan, Stefan~K. Muller, Umut~A. Acar, and Guy Blelloch.
\newblock Hierarchical memory management for parallel programs.
\newblock In {\em Proceedings of the 21st ACM SIGPLAN International Conference
  on Functional Programming}, ICFP 2016, pages 392--406, New York, NY, USA,
  2016. ACM.

\bibitem{RamanZhSa10}
Raghavan Raman, Jisheng Zhao, Vivek Sarkar, Martin Vechev, and Eran Yahav.
\newblock Efficient data race detection for async-finish parallelism.
\newblock In Howard Barringer, Ylies Falcone, Bernd Finkbeiner, Klaus Havelund,
  Insup Lee, Gordon Pace, Grigore Rosu, Oleg Sokolsky, and Nikolai Tillmann,
  editors, {\em Runtime Verification}, volume 6418 of {\em Lecture Notes in
  Computer Science}, pages 368--383. Springer Berlin / Heidelberg, 2010.

\bibitem{RamanZhSa12}
Raghavan Raman, Jisheng Zhao, Vivek Sarkar, Martin Vechev, and Eran Yahav.
\newblock Scalable and precise dynamic datarace detection for structured
  parallelism.
\newblock In {\em Proceedings of the 33rd ACM SIGPLAN Conference on Programming
  Language Design and Implementation}, PLDI '12, pages 531--542, 2012.

\bibitem{SavageBuNe97}
Stefan Savage, Michael Burrows, Greg Nelson, Patrick Sobalvarro, and Thomas
  Anderson.
\newblock Eraser: A dynamic race detector for multi-threaded programs.
\newblock In {\em Proceedings of the Sixteenth ACM Symposium on Operating
  Systems Principles (SOSP)}, October 1997.

\bibitem{singer+priority-2020}
Kyle Singer, Noah Goldstein, Stefan~K. Muller, Kunal Agrawal, I{-}Ting~Angelina
  Lee, and Umut~A. Acar.
\newblock Priority scheduling for interactive applications.
\newblock In Christian Scheideler and Michael Spear, editors, {\em {SPAA} '20:
  32nd {ACM} Symposium on Parallelism in Algorithms and Architectures, Virtual
  Event, USA, July 15-17, 2020}, pages 465--477.

\bibitem{ses+sound-2012}
Yannis Smaragdakis, Jacob Evans, Caitlin Sadowski, Jaeheon Yi, and Cormac
  Flanagan.
\newblock Sound predictive race detection in polynomial time.
\newblock In John Field and Michael Hicks, editors, {\em Proceedings of the
  39th {ACM} {SIGPLAN-SIGACT} Symposium on Principles of Programming Languages,
  {POPL} 2012, Philadelphia, Pennsylvania, USA, January 22-28, 2012}, pages
  387--400. {ACM}, 2012.

\bibitem{uaf+race-2016}
Robert Utterback, Kunal Agrawal, Jeremy~T. Fineman, and I{-}Ting~Angelina Lee.
\newblock Provably good and practically efficient parallel race detection for
  fork-join programs.
\newblock In {\em Proceedings of the 28th {ACM} Symposium on Parallelism in
  Algorithms and Architectures, {SPAA} 2016, Asilomar State Beach/Pacific
  Grove, CA, USA, July 11-13, 2016}, pages 83--94, 2016.

\bibitem{westrick+disent-2020}
Sam Westrick, Rohan Yadav, Matthew Fluet, and Umut~A. Acar.
\newblock Disentanglement in nested-parallel programs.
\newblock In {\em Proceedings of the 47th Annual {ACM} Symposium on Principles
  of Programming Languages (POPL)"}, 2020.

\bibitem{xal-determinacy-2021}
Yifan Xu, Kunal Agrawal, and I{-}Ting~Angelina Lee.
\newblock Efficient parallel determinacy race detection for structured futures.
\newblock In Kunal Agrawal and Yossi Azar, editors, {\em {SPAA} '21: 33rd {ACM}
  Symposium on Parallelism in Algorithms and Architectures, Virtual Event, USA,
  6-8 July, 2021}, pages 398--409. {ACM}, 2021.

\bibitem{xsl-determinacy-2020}
Yifan Xu, Kyle Singer, and I{-}Ting~Angelina Lee.
\newblock Parallel determinacy race detection for futures.
\newblock In Rajiv Gupta and Xipeng Shen, editors, {\em PPoPP '20: 25th {ACM}
  {SIGPLAN} Symposium on Principles and Practice of Parallel Programming, San
  Diego, California, USA, February 22-26, 2020}, pages 217--231. {ACM}, 2020.

\bibitem{yrc-2005-racetrack}
Yuan Yu, Tom Rodeheffer, and Wei Chen.
\newblock Racetrack: efficient detection of data race conditions via adaptive
  tracking.
\newblock In Andrew Herbert and Kenneth~P. Birman, editors, {\em Proceedings of
  the 20th {ACM} Symposium on Operating Systems Principles 2005, {SOSP} 2005,
  Brighton, UK, October 23-26, 2005}, pages 221--234. {ACM}, 2005.

\end{thebibliography}

% \end{document}

%% Appendix
% \clearpage
% \appendix
% \input{appendix}

\end{document}